\definecolor{lgray}{gray}{0.92}
\definecolor{lblue}{rgb}{0.90,0.90,1.00}
\definecolor{lyellow}{rgb}{1.00,1.00,0.70}
\newenvironment{codex}{\small\verbatim}{\endverbatim\normalsize}
\newtheorem{prop}{Proposition}
\newtheorem{df}{Definition}
\newcommand{\BI}[0]{\begin{itemize}}
\newcommand{\EI}[0]{\end{itemize}}
\newcommand{\BE}[0]{\begin{enumerate}}
\newcommand{\EE}[0]{\end{enumerate}}
\newcommand{\BX}[0]{\begin{codex}}
\newcommand{\EX}[0]{\end{codex}}
\def \bscale1 {0.25}
\def \bscale {0.25}
\def \N {\mathbb{N}}
\def \T {\mathbb{T}}
\newcommand{\FIG}[4]{
\begin{figure}[htbp]
\centering
{\includegraphics[scale=#3]{../figs/#4}}
\caption{#2}
\label{#1}
\end{figure}
}
\title{
  A Prolog Specification of Giant Number Arithmetic
}
\author{Paul Tarau}
\institute{
   Department of Computer Science and Engineering\\
   University of North Texas\\
   {\em e-mail: tarau@cs.unt.edu}
}
\begin{document}
\maketitle
\date{}

\begin{abstract}
The tree based representation described in this paper, {\em hereditarily binary numbers}, applies recursively a run-length compression mechanism
that enables computations limited by the structural
complexity of their operands rather than 
by their bitsizes. While within constant factors
from their traditional counterparts for their
worst case behavior, our arithmetic operations 
open the doors for interesting numerical computations,
impossible with traditional number representations.

We provide a complete specification of our algorithms
in the form of a purely declarative Prolog program.

{\em {\bf Keywords:} 
hereditary numbering systems, 
compressed number representations,
arithmetic computations with giant numbers,
tree-based numbering systems,
Prolog as a specification language.
}

\end{abstract}

\section{Introduction}

While {\em notations} like Knuth's ``up-arrow'' \cite{knuthUp} or 
tetration 
are useful in describing very large numbers, they do not provide 
the ability to actually {\em compute} with them - as, for instance, 
addition or multiplication with a natural number results in a number that
cannot be expressed with the notation anymore.

The novel contribution of this paper is a tree-based numbering
system that {\em allows computations} with numbers comparable in size
with Knuth's ``arrow-up'' notation.
Moreover, these computations have a worst case complexity
that is comparable with the traditional  binary numbers,
while their best case complexity outperforms binary numbers by an arbitrary
tower of exponents factor. Simple operations like successor, 
multiplication by 2, exponent of 2 are practically constant time and
a number of other operations of practical interest 
like addition, subtraction and comparison benefit from
significant complexity reductions.
For the curious reader, it is basically
a {\em hereditary number system} similar to \cite{goodstein},
based on recursively applied {\em run-length} 
compression of a special (bijective) binary digit notation.

A concept of structural complexity is also introduced, based on the
size of our tree representations. It provides estimates
on worst and best cases for our algorithms and it serves as
an indicator of the expected performance of our arithmetic operations.

We have adopted
a {\em literate programming} style, i.e. the
code contained in the paper
forms a self-contained Prolog program 
(tested with SWI-Prolog, Lean Prolog and Styla),
also available as a separate file at
\url{http://logic.cse.unt.edu/tarau/research/2013/hbn.pl} .
We hope that this will encourage the reader to
experiment interactively and validate the
technical correctness of our claims. 

The paper is organized as follows.
Section \ref{bijbin} gives some background on representing
bijective base-2 numbers as iterated function application and
section \ref{herbin} introduces hereditarily binary numbers.
Section \ref{succ} describes practically constant time successor and predecessor operations
on tree-represented numbers.
Section \ref{emu} shows an emulation of bijective base-2 with hereditarily binary numbers
and section \ref{arith} discusses some of their basic arithmetic operations.
Section \ref{stru} defines a concept of structural complexity studies
best and worst cases and  comparisons with bitsizes.
Section \ref{related} discusses related work.
Section \ref{concl} concludes the paper and discusses future work.
Finally, an optimized general multiplication algorithm
is described in the Appendix.

\section{Bijective base-2 numbers as iterated function applications} \label{bijbin}

Natural numbers can be seen
as represented by iterated
applications of the functions $o(x)=2x+1$ and $i(x)=2x+2$
corresponding the so called
{\em bijective base-2} representation \cite{sac12} 
together with the convention
that 0 is represented as the empty sequence. 
As each $n \in \N$ can be seen as a unique composition of these functions
we can make this precise as follows:
\begin{df}
We call bijective base-2 representation of $n \in \N$ the 
unique sequence of applications of functions $o$ and $i$ 
to  $\epsilon$ that evaluates to $n$.
\end{df}
With this representation,
and denoting the empty sequence $\epsilon$, one obtains 
$0=\epsilon, 
1=o(\epsilon),
2=i(\epsilon),
3=o(o(\epsilon)),
4=i(o(\epsilon)),
5=o (i(\epsilon))$ etc. 
and the following holds:
\begin{equation} \label{isucco}
  i(x)=o(x)+1
\end{equation}

\subsection{Properties of the iterated functions $o^n$ and $i^n$}

\begin{prop}\label{fastexp}
Let $f^n$ denote application of function $f$ $n$ times. 
Let $o(x)=2x+1$ and $i(x)=2x+2$, $s(x)=x+1$ and $s'(x)=x-1$. 
Then $k>0 \Rightarrow s(o^n(s'(k))=k2^n$ and 
$k>1 \Rightarrow s(s(i^n(s'(s'(k))))=k2^n$. 
In particular, $s(o^n(0))=2^n$ and $s(s(i^n(0)))=2^{n+1}$.
\end{prop}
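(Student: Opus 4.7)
The plan is to reduce both identities to a one-step commutation lemma between $s$ and the iterated function, then promote it to $n$ steps by induction, and finally substitute away the pre/post $s'$'s by cancellation with $s$.

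First I would establish the key base identities by direct algebraic manipulation: since $o(x)=2x+1$, we get $s(o(x)) = 2x+2 = 2(x+1) = 2\,s(x)$, so $s$ intertwines $o$ with doubling. Analogously, since $i(x)=2x+2$, we get $s(s(i(x))) = 2x+4 = 2(x+2) = 2\,s(s(x))$, so the composition $s\circ s$ intertwines $i$ with doubling. These are the only computations in the proof; everything else is formal.

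Next I would do induction on $n$. For the $o$ case, the hypothesis $s(o^n(x)) = 2^n s(x)$ combined with $s(o(y)) = 2s(y)$ applied at $y = o^n(x)$ (after first rewriting $s(o^{n+1}(x)) = s(o(o^n(x)))$) yields $2\cdot 2^n s(x) = 2^{n+1} s(x)$. The $i$ case is identical with $s$ replaced by $s\circ s$: $s^2(i^{n+1}(x)) = s^2(i(i^n(x))) = 2\, s^2(i^n(x)) = 2^{n+1} s^2(x)$. Base cases $n=0$ are trivial.

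Finally, to derive the stated forms, I substitute $x = s'(k)$ in the first identity; the hypothesis $k>0$ ensures $s'(k) \in \mathbb{N}$, and $s(s'(k)) = k$ gives $s(o^n(s'(k))) = 2^n k$. Likewise $x = s'(s'(k))$ with $k>1$ gives $s(s(i^n(s'(s'(k))))) = 2^n k$. The special cases are the instances $k=1$ and $k=2$ respectively, since $s'(1)=0$ and $s'(s'(2))=0$, yielding $s(o^n(0))=2^n$ and $s(s(i^n(0)))=2^{n+1}$.

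I do not expect a real obstacle here; the only subtle point is being careful with the side condition on $k$ so that repeated predecessors remain in $\mathbb{N}$, which is exactly what the hypotheses $k>0$ and $k>1$ provide.
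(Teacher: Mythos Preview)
Your proof is correct and follows essentially the same approach as the paper: induction on $n$ driven by a one-step commutation identity. The only cosmetic difference is the direction of the conjugation---you peel off the \emph{outer} application via $s\circ o = 2\cdot s$ (and $s^2\circ i = 2\cdot s^2$) and prove the general form $s(o^n(x))=2^n s(x)$ before specializing, whereas the paper absorbs the \emph{inner} application via $o(s'(k))=s'(2k)$ and inducts directly on the stated form with the parameter $k$ doubling at each step.
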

\begin{proof}
By induction.
Observe that for $n=0,k>0, s(o^0(s'(k))=k2^0$ because $s(s'(k)))=k$. 
Suppose that $P(n): k>0 \Rightarrow s(o^n(s'(k)))=k2^n$ holds. 
Then, assuming $k>0$, P(n+1) follows, given that
$s(o^{n+1}(s'(k)))=s(o^n(o(s'(k))))=s(o^n(s'(2k)))= 2k2^n= k2^{n+1}$.
Similarly, the second part of the proposition also follows by induction on $n$.
\end{proof}
The underlying arithmetic identities are:

\begin{equation}\label{onk}
o^n(k)=2^n(k+1)-1
\end{equation}
\begin{equation}\label{ink}
i^n(k)=2^n(k+2)-2
\end{equation}
and in particular
\begin{equation}
o^n(0)=2^n-1
\end{equation}\label{onk0}
\begin{equation}\label{ink0}
i^n(0)=2^{n+1}-2
\end{equation}

\section{Hereditarily binary numbers} \label{herbin}

\subsection{Hereditary Number Systems}
Let us observe that conventional number systems, as well as 
the bijective base-2 numeration system described so far, represent 
{\em blocks of contiguous 0 and 1 digits} appearing 
in the binary representation of a number
somewhat naively - one 
digit for each element of the block.
Alternatively, one might think that counting the blocks
and representing the resulting
counters as {\em binary numbers} would be also possible. 
But then, the same principle could be  applied recursively.
So instead of representing each block of 0 or 1 digits by as many
symbols as the size of the block -- essentially a {\em unary} representation --
one could also encode the number of elements in such a block using
a {\em binary} representation.

This brings us to the idea of hereditary number systems.

\subsection{Hereditarily binary numbers as a data type}

First, we define a data type for our tree represented natural numbers,
that we call {\em hereditarily binary numbers} to emphasize that
{\em binary} rather than {\em unary}
encoding is recursively used in their representation.
\begin{df}
The data type $\T$ of the set of hereditarily binary numbers 
is defined inductively as the set of Prolog terms such that:
\end{df}
\begin{equation}
X \in \T~\text{if and only if}~ X=e ~or~X~\text{is of the form}~v(T,Ts)~or~w(T,Ts)
\end{equation}
where $T\in\T$ and 
$Ts$ stands for a finite sequence (list) of elements of $\T$.

The intuition behind the set $\T$ is the following:
\begin{itemize}
\item The term $e$ (empty leaf) corresponds to zero 
\item the term $v(T,Ts)$ counts the number $T+1$ (as counting starts at 0) of {\tt o} applications followed by an {\em alternation}
of similar counts of {\tt i} and {\tt o} applications in $Ts$

\item the term $w(T,Ts)$ counts the number $T+1$ of {\tt i} applications followed by an {\em alternation}
of similar counts of {\tt o} and {\tt i} applications in $Ts$

\item the same principle is applied recursively for the counters, until the empty sequence is reached
\end{itemize}
One can see this process as run-length compressed bijective base-2 numbers, represented as trees
with either empty leaves or at least one branch, after applying the encoding recursively.

\begin{df}
The function $n:\T \to \N$  shown in equation {\bf \ref{deft}} 
defines the unique natural number associated to a term of type $\T$.
\begin{figure*}
\begin{equation}\label{deft}
n(T)=
\begin{cases}
0&  \text{if $T=~${\tt e}},\\
2^{n(X)+1}-1&  \text{if $T=~${\tt v(X,[])}},\\

{(n(U)+1)}2^{n(X)+1}-1&  \text{if $T=~${\tt v(X,[Y|Xs])} and $U=~${\tt w(Y,Xs)}},\\

2^{n(X)+2}-2&  \text{if $T=~${\tt w(X,[])}},\\

{(n(U)+2)}2^{n(X)+1}-2&  \text{if $T=~${\tt w(X,[Y|Xs])} and $U=~${\tt v(Y,Xs)}}.
\end{cases}
\end{equation}
\end{figure*}
\end{df}
For instance, the computation of {\tt N} in 
{\tt ?- n(w(v(e, []), [e, e, e]),N)} expands to
$({({({2^{{0}+1}-1}+2)2^{{0} + 1}-2}+1)2^{{0} + 1}-1}+2)2^{{2^{{0}+1}-1} + 1}-2=42$.
The Prolog equivalent of equation (\ref{deft}) (using bitshifts for exponents of 2) is:
\begin{code}
n(e,0).
n(v(X,[]),R) :-n(X,Z),R is 1<<(1+Z)-1.
n(v(X,[Y|Xs]),R):-n(X,Z),n(w(Y,Xs),K),R is (K+1)*(1<<(1+Z))-1.
n(w(X,[]),R):-n(X,Z),R is 1<<(2+Z)-2.
n(w(X,[Y|Xs]),R):-n(X,Z),n(v(Y,Xs),K),R is (K+2)*(1<<(1+Z))-2. 
\end{code}
The following example illustrates the values associated with
the first few natural numbers.
\begin{codex}
0:e, 1:v(e,[]), 2:w(e,[]), 3:v(v(e,[]),[]), 4:w(e,[e]), 5:v(e,[e])
\end{codex}
Note that a term of the form {\tt v(X,Xs)} represents an
odd number $\in \N^+$ and a term of the form {\tt w(X,Xs)} represents an
even number $\in \N^+$. The following holds:
\begin{prop}
$n:\T \to \N$ is a bijection, i.e., each term canonically represents the
corresponding natural number.
\end{prop}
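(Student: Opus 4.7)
The plan is to establish the bijection by strong induction on $k \in \N$, showing simultaneously that for each $k$ there is exactly one $T \in \T$ with $n(T) = k$. The five defining clauses of $n$ already partition the image by a sign/parity test: only $T = e$ produces $0$; both $v$-clauses produce $(\mathrm{odd}) \cdot 2^{n(X)+1} - 1$, hence odd positive values; and both $w$-clauses produce $(\mathrm{odd}) \cdot 2^{n(X)+1} - 2$, which is even positive. So the head symbol of the tree is forced by the value of $k$, and the base case $k = 0$ is immediate.

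For odd $k > 0$ I would extract the unique factorization $k + 1 = m \cdot 2^{j}$ with $m$ odd and $j \ge 1$ (the usual $2$-adic decomposition). When $m = 1$ this matches the clause $n(v(X,[])) = 2^{n(X)+1} - 1$, forcing $n(X) = j - 1$ and the list to be empty; when $m \ge 3$ it matches $n(v(X,[Y|Xs])) = (n(w(Y,Xs)) + 1) \cdot 2^{n(X)+1} - 1$, forcing $n(X) = j - 1$ and $n(w(Y,Xs)) = m - 1$. The even case $k > 0$ is symmetric with $k + 2 = m \cdot 2^{j}$, splitting on $m = 1$ versus $m \ge 3$ to hit the two $w$-clauses. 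Proposition~\ref{fastexp} is the arithmetic backbone that licenses these decompositions.

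To close the induction I must verify that both recursive calls hit values strictly smaller than $k$. For the counter, $2^{j} \le k + c$ (with $c \in \{1, 2\}$) gives $j - 1 < k$ for $k \ge 1$, so the unique $X$ with $n(X) = j - 1$ exists by the induction hypothesis. For the tail, $m \cdot 2^j = k + c$ with $j \ge 1$ yields $m \le (k + c)/2$, hence $m - c < k$. Invoking the induction hypothesis on $m - c$ gives a unique term, which by the parity argument above is necessarily $w$-headed (inside a $v$-context) or $v$-headed (inside a $w$-context); its own components $Y$ and $Xs$ are then read off by the same analysis one level down.

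The main obstacle is the bookkeeping around the implicit alternation built into the clauses: the definition of $n$ only inspects the first list element $Y$ and repackages $[Y|Xs]$ as a term of the opposite kind to its parent, so the parity of the tail value $m - c$ has to agree with the head symbol expected for that repackaging. This works out automatically because $m$ is odd: $m - 1$ is even (filling a $w$-slot inside a $v$-context) and $m - 2$ is odd (filling a $v$-slot inside a $w$-context). Once this observation is nailed down, the five Prolog clauses of \texttt{n/2} can be read directly as the clauses of an inverse function, and bijectivity drops out of the induction.
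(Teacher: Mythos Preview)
Your argument is correct. The strong induction on $k$ with the $2$-adic split of $k+1$ (odd case) and $k+2$ (even case) cleanly forces both the head symbol and the two recursive components, and your parity check on $m-1$ versus $m-2$ is exactly what is needed to guarantee the repackaged tail lands in the right constructor. The bounds you give for the recursive calls are fine; the one asymmetry you gloss over in the even case (the $w(X,[\,])$ clause has exponent $n(X)+2$ rather than $n(X)+1$, so there $n(X)=j-2$) is harmless since $m=1$ forces $j\ge 2$ when $k\ge 2$.

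The paper's own proof is a one-line appeal to identities (\ref{onk}) and (\ref{ink}): it observes that the clauses defining $n$ are literally those identities with the iterated $o^n$, $i^n$ blocks unfolded, so $n$ just evaluates the alternating $o/i$ composition encoded by the tree, and bijectivity is inherited from the already-known bijectivity of bijective base-$2$ together with run-length coding of its digit blocks. Your route is different in that it is self-contained: you never invoke bijective base-$2$ as a prior fact, but instead build the inverse directly from the $2$-adic factorisation and close the recursion by size. What the paper's version buys is brevity and a clear conceptual picture (the tree is a run-length code, recursively); what yours buys is an explicit inverse and a proof that does not depend on the reader accepting the block-encoding correspondence on faith.
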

\begin{proof}
It follows from the identities (\ref{onk}), (\ref{ink}) by 
replacing the power of 2 functions with the
corresponding iterated applications of
$o$ and $i$.
\end{proof}

\section{Successor and predecessor} \label{succ}

We will now specify successor and predecessor
through a {\em reversible} Prolog
predicate  {\tt s(Pred,Succ)} holding
if {\tt Succ} is the successor of {\tt Pred}.
\begin{code}
s(e,v(e,[])).
s(v(e,[]),w(e,[])).
s(v(e,[X|Xs]),w(SX,Xs)):-s(X,SX).
s(v(T,Xs),w(e,[P|Xs])):-s(P,T). 
s(w(T,[]),v(ST,[])):-s(T,ST).
s(w(Z,[e]),v(Z,[e])).
s(w(Z,[e,Y|Ys]),v(Z,[SY|Ys])):-s(Y,SY).
s(w(Z,[X|Xs]),v(Z, [e,SX|Xs])):-s(SX,X). 
\end{code}

It can be proved by structural induction that Peano's axioms hold
and as a result $<\T,e,s>$ is a Peano algebra.

Note that recursive calls to {\tt s} in {\tt s} happen on terms that are
(roughly) logarithmic in the bitsize of their operands.  One can therefore 
assume that their complexity, computed by an {\em iterated logarithm},
is practically 
constant\footnote{Empirically, 
when computing the successor on 
the first $2^{30}=1073741824$ natural 
numbers (with a deterministic, functional equivalent of our reversible {\tt s} and its inverse), 
there are in total
2381889348 calls to {\tt s}, 
averaging to 2.2183 per successor and predecessor computation.
The same average for 100 successor computations on
5000 bit random numbers also oscillates around 2.21.}.
Note also that by using a single reversible predicate {\tt s}
for both successor and predecessor, while the solution is
always unique, some backtracking occurs
in the latest case. One can eliminate this by using two 
specialized predicates for successor and predecessor.

\section{Emulating the bijective base-2 operations {\tt o}, $i$} \label{emu}

To be of any practical interest, we will need to ensure that
our data type $\T$ emulates also binary arithmetic. We will
first show that it does, and next we will show that
on a number of operations like exponent of 2 or multiplication
by an exponent of 2, it significantly lowers complexity.

Intuitively, the first step should be easy, as we need
to express single applications or ``un-applications'' of
{\tt o} and {\tt i} in terms of their iterated applications
encapsulated in the terms of type $\T$.

First we emulate single applications of {\tt o} and {\tt i} seen
in terms of {\tt s}. Note that {\tt o/2} and {\tt i/2}
are also {\em reversible}
predicates.
\begin{code}
o(e,v(e,[])).
o(w(X,Xs),v(e,[X|Xs])).
o(v(X,Xs),v(SX,Xs)):-s(X,SX).

i(e,w(e,[])).
i(v(X,Xs),w(e,[X|Xs])).
i(w(X,Xs),w(SX,Xs)):-s(X,SX).
\end{code}
Finally the ``recognizers'' {\tt o\_} and {\tt i\_}
simply detect {\tt v} and {\tt w} corresponding
to {\tt o} (and respectively {\tt i}) being the last operation applied
and {\tt s\_} detects that the number is a successor, i.e., not
the empty term {\tt e}.
\begin{code}
s_(v(_,_)).    s_(w(_,_)).

o_(v(_,_)).    i_(w(_,_)).
\end{code}

Note that each of the predicates {\tt o} and 
{\tt i} calls {\tt s} and on a term that is (on the average)
logarithmically smaller. As a worst case, the following holds:
\begin{prop}
The costs of {\tt o} and {\tt i} are within a constant factor from  the cost of {\tt s}.
\end{prop}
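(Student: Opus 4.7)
The plan is to do a case analysis on the three clauses of \texttt{o} (the argument for \texttt{i} is symmetric, swapping the roles of \texttt{v} and \texttt{w}), and to charge each clause's work against the cost of a single call to \texttt{s}.

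First I would observe that the first two clauses of \texttt{o}, namely \texttt{o(e,v(e,[]))} and \texttt{o(w(X,Xs),v(e,[X|Xs]))}, only perform head unification and term construction at the root; no recursive call to \texttt{o} or \texttt{s} is triggered. So their cost is bounded by a fixed constant $c_0$ that does not depend on the input. Since \texttt{s} itself performs at least one unification step on any input, $c_0$ is certainly bounded by a constant multiple of the cost of \texttt{s} on the same (or any) operand.

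Next I would treat the only nontrivial clause, \texttt{o(v(X,Xs),v(SX,Xs)):-s(X,SX)}. Here the total work is a constant amount of head unification plus exactly one recursive call \texttt{s(X,SX)}, whose argument \texttt{X} is a proper subterm of the input \texttt{v(X,Xs)}. Letting $C_s(T)$ denote the cost of \texttt{s} on a term $T$, the cost of \texttt{o} on \texttt{v(X,Xs)} is therefore at most $c_1 + C_s(X)$ for some constant $c_1$. Since one valid execution of \texttt{s(v(X,Xs),\_)} likewise recurses into \texttt{s(X,SX)} (look at the \texttt{s(v(e,[X|Xs]),\ldots)} and \texttt{s(v(T,Xs),\ldots)} clauses, each of which performs a single recursive call on a strictly smaller term of essentially the same shape), we have $C_s(X) \le C_s(\texttt{v(X,Xs)})$ up to a constant. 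Combining the two yields a bound $C_o(\texttt{v(X,Xs)}) \le c\cdot C_s(\texttt{v(X,Xs)})$ for a constant $c$.

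Finally I would package the three cases into a single statement $C_o(T) \le c\cdot C_s(T)+c'$, which gives the desired constant-factor bound, and note that the identical argument, with \texttt{w} replacing \texttt{v} and \texttt{i} replacing \texttt{o}, handles \texttt{i}. The only delicate point, and really the crux of the argument, is the bookkeeping step comparing $C_s(X)$ with $C_s(\texttt{v(X,Xs)})$: one has to be careful that \texttt{s} genuinely does a recursive descent whose work dominates that of its inner call, so that charging \texttt{o}'s single \texttt{s}-call against \texttt{s}'s own cost on the outer term is legitimate. Everything else is direct inspection of the clauses.
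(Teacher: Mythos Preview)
Your case analysis is correct and reaches the right conclusion, but it is considerably more elaborate than what the paper does. The paper offers no separate proof environment for this proposition at all: the justification is the sentence immediately preceding it, namely that each of \texttt{o} and \texttt{i} performs only constant-time pattern matching plus at most one call to \texttt{s} (on a subterm). From that observation the constant-factor bound is immediate, with no need to compare $C_s(X)$ against $C_s(\texttt{v(X,Xs)})$ on the \emph{same} input.

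Your extra ``delicate point'' is thus self-imposed. It arises only because you interpret the statement as $C_o(T)\le c\cdot C_s(T)$ pointwise, whereas the paper is content with the weaker (and easier) reading that the worst-case cost of \texttt{o} is dominated by the worst-case cost of \texttt{s}. Your handling of that point is also slightly imprecise: for \texttt{s(v(T,Xs),\_)} with $T\neq\texttt{e}$ the recursive call is \texttt{s(P,T)} (a \emph{predecessor} computation on $T$), not \texttt{s(T,ST)}, and for \texttt{s(v(e,[Y|Ys]),\_)} the recursion is on the list head $Y$, not on the first argument. Neither discrepancy breaks the argument, since in each case \texttt{o}'s inner \texttt{s}-call is no more expensive than \texttt{s}'s own inner call, but if you keep this pointwise comparison you should state those cases accurately. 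The simplest fix is to drop the comparison altogether and argue, as the paper does, directly from ``at most one \texttt{s} call plus $O(1)$ work''.
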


\begin{df}
The function $t:\N \to \T$ defines the unique tree of type $\T$ associated to a natural number as follows:
\begin{equation}
t(x)=
\begin{cases}
{\tt e}&  \text{if $x=~${\tt 0}},\\
\text{\tt o}(t({{x-1}\over 2})) &  \text{if $x>0$ and $x$ is odd},\\
\text{\tt i}(t({x \over 2}-1)) &  \text{if $x>0$ and $x$ is even}\\
\end{cases}
\end{equation}
\end{df}

We can now define the corresponding Prolog predicate 
that converts from terms of type $\T$ to
natural numbers. Note that we use bitshifts ({\tt >>}) for division by 2.
\begin{code}
t(0,e).
t(X,R):-X>0, X mod 2=:=1,Y is (X-1)>>1, t(Y,A),o(A,R).
t(X,R):-X>0, X mod 2=:=0,Y is (X>>1)-1, t(Y,A),i(A,R).
\end{code}
The following holds: 

\begin{prop}
Let {\tt id} denote $\lambda x.x$ and ``$\circ$'' function composition. Then, on their respective domains
\begin{equation}
t \circ n = id,~~
n \circ t = id
\end{equation}
\end{prop}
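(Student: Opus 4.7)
The plan is to reduce the two identities to a single one and then run an induction on natural numbers. Since the preceding proposition already establishes that $n : \T \to \N$ is a bijection, proving $n \circ t = \mathrm{id}_\N$ automatically yields $t \circ n = \mathrm{id}_\T$: for any $T \in \T$ we have $n(t(n(T))) = n(T)$ by the first identity, and then injectivity of $n$ gives $t(n(T)) = T$. So the real content is to show $n(t(x)) = x$ for every $x \in \N$.

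I would prove this by strong induction on $x$. The base case $x = 0$ is immediate from the clauses $t(0,e)$ and $n(e,0)$. For the inductive step I split on the parity of $x$: if $x$ is odd then $t(x) = o(t((x-1)/2))$ and if $x$ is even then $t(x) = i(t(x/2 - 1))$. In both subcases the induction hypothesis applies to the argument of $o$ or $i$, so the step reduces to a single key lemma that must be established first.

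Key lemma: for every $A \in \T$, if $o(A) = B$ then $n(B) = 2\,n(A) + 1$, and if $i(A) = B$ then $n(B) = 2\,n(A) + 2$. This is precisely the claim that the predicates \texttt{o/2} and \texttt{i/2} faithfully emulate the bijective base-2 digit functions on the tree representation. Given the lemma, the induction closes routinely: in the odd case $n(t(x)) = 2\,n(t((x-1)/2)) + 1 = 2 \cdot (x-1)/2 + 1 = x$, and symmetrically in the even case.

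The main obstacle is the key lemma itself, because it forces a case analysis on the three clauses of \texttt{o} (and dually of \texttt{i}) crossed with the five clauses of $n$, and the third clause \texttt{o(v(X,Xs),v(SX,Xs)) :- s(X,SX)} also invokes the successor predicate. The three cases correspond to $A = e$, $A = w(X,Xs)$ (where the leading block of \texttt{o}'s has size $1$, opening a new \texttt{v}-block), and $A = v(X,Xs)$ (where the existing \texttt{o}-block grows, incrementing its counter via \texttt{s}). For each I would unfold the relevant defining equation from (\ref{deft}), use identities (\ref{onk})--(\ref{ink0}) plus the Peano property of \texttt{s} to rewrite $n(SX) = n(X) + 1$, and verify algebraically that the result equals $2\,n(A)+1$; the \texttt{i} cases are entirely analogous. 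Once this bookkeeping is done, the reduction in the first paragraph finishes both halves of the proposition.
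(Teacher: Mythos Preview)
Your proposal is correct and is essentially a detailed elaboration of the paper's own proof, which reads in full: ``By induction, using the arithmetic formulas defining the two functions.'' You make the same inductive argument concrete, add the key lemma that the Prolog predicates \texttt{o} and \texttt{i} realise the maps $A \mapsto 2\,n(A)+1$ and $A \mapsto 2\,n(A)+2$, and note that the \texttt{s}-case relies on the Peano property already claimed for \texttt{s}.

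The one genuine addition you make is the reduction step: rather than running two separate inductions (one on $\N$ for $n\circ t$ and a structural one on $\T$ for $t\circ n$), you invoke the previously established bijectivity of $n$ to collapse the second identity onto the first. This is a legitimate economy the paper does not mention; it buys you half the work at the price of a dependency on the earlier proposition. Either route is fine, and yours is arguably cleaner.
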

\begin{proof}
By induction, using the arithmetic formulas defining the two functions.
\end{proof}
Note also that the cost of $t$ is proportional to the bitsize
of its input and the cost of $n$ is proportional to the 
bitsize of its output.

\section{Arithmetic operations} \label{arith}

\subsection{A few low complexity operations}
Doubling a number {\tt db}
and reversing the {\tt db} operation ({\tt hf}) are
quite simple, once one remembers that the arithmetic
equivalent of function {\tt o} is $\lambda x.2x+1$.
\begin{code}
db(X,Db):-o(X,OX),s(Db,OX).
hf(Db,X):-s(Db,OX),o(X,OX).
\end{code}

Note that efficient implementations follow directly from
our number theoretic observations in section \ref{bijbin}. 
For instance, as a consequence
of proposition \ref{fastexp},
the operation
{\tt exp2} computing an exponent of $2$ ,
has the following simple definition in terms of
{\tt s}.
\begin{code}
exp2(e,v(e,[])).
exp2(X,R):-s(PX,X),s(v(PX,[]),R).
\end{code}

\begin{prop}
The costs of {\tt db, hf} and {\tt exp2} are within a constant factor from the cost of {\tt s}.
\end{prop}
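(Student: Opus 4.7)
The plan is to analyze each of the three operations in turn and observe that each is defined by a fixed number of clauses whose bodies make only a bounded number of calls to the primitives \texttt{s}, \texttt{o}, \texttt{i}, together with constant-time pattern matching and term construction. Combined with the previously established proposition that \texttt{o} and \texttt{i} themselves cost $O(\mathrm{cost}(\texttt{s}))$, a constant-factor bound will fall out for each operation.

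For \texttt{db} and \texttt{hf}, each definition consists of a single clause body containing exactly one call to \texttt{o} and one call to \texttt{s}, with the remaining work being unification on small subterms. By the previous proposition there is a constant $c$ with $\mathrm{cost}(\texttt{o}) \leq c\cdot\mathrm{cost}(\texttt{s})$, so the total cost is bounded by $(c+1)\mathrm{cost}(\texttt{s}) + O(1) = O(\mathrm{cost}(\texttt{s}))$. The same reasoning applies to \texttt{hf} since its body invokes exactly the same two primitives, just in the opposite logical direction — this reversibility is precisely what the authors arranged by writing \texttt{s} and \texttt{o} as relations rather than functions.

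For \texttt{exp2}, the base case \texttt{exp2(e,v(e,[]))} succeeds after a single head unification, which is constant time. The non-base clause body contains exactly two calls to \texttt{s} — one to compute the predecessor \texttt{PX} of \texttt{X}, and one to compute the successor of the freshly constructed \texttt{v(PX,[])} — together with the $O(1)$ cost of assembling that term. Crucially, \texttt{exp2} itself performs no recursion; the apparent jump from \texttt{X} to $2^{\texttt{X}}$ is absorbed into the single wrap in a \texttt{v}-node, so the total work is a constant multiple of $\mathrm{cost}(\texttt{s})$.

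The main subtlety — and the only point that really requires care — is the implicit quantifier in the phrase ``within a constant factor from the cost of \texttt{s}'': the cost of \texttt{s} depends on its argument, so one must check that the \texttt{s}-calls made inside \texttt{db}, \texttt{hf}, and \texttt{exp2} are on arguments whose size is bounded by a constant multiple of the size of the outer operand. For \texttt{db} and \texttt{hf} the calls are on \texttt{o(X)} and on \texttt{X} itself, differing by at most one successor step; for \texttt{exp2} the two calls are on \texttt{X} and on a term whose size exceeds \texttt{X} by only the overhead of one \texttt{v}-node. Once this size-comparability is noted, summing the bounded contributions yields the desired constant-factor bound.
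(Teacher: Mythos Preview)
Your proposal is correct and follows essentially the same approach as the paper, which simply observes that at most two calls to \texttt{s} and \texttt{o} are made in each of \texttt{db}, \texttt{hf}, and \texttt{exp2}. Your version is more careful than the paper's one-line proof in that you explicitly address the dependence of the cost of \texttt{s} on its argument and verify that the inner calls are on terms of comparable size, a point the paper leaves implicit.
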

\begin{proof}
It follows by observing that at most 2 calls to {\tt s, o} are
made in each. 
\end{proof}

\subsection{Reduced complexity addition and subtraction}

We now derive  efficient addition and subtraction
operations similar
to the successor/predecessor {\tt s}, that {\em work on one
run-length encoded bloc at a time}, rather than by
individual {\tt o} and {\tt i} steps.

We first
define the predicates {\tt otimes} corresponding to $o^n(k)$ and {\tt itimes} corresponding to $i^n(k)$.
\begin{code}
otimes(e,Y,Y).
otimes(N,e,v(PN,[])):-s(PN,N).
otimes(N,v(Y,Ys),v(S,Ys)):-add(N,Y,S).
otimes(N,w(Y,Ys),v(PN,[Y|Ys])):-s(PN,N).
\end{code}
\begin{code}
itimes(e,Y,Y).
itimes(N,e,w(PN,[])):- s(PN,N).
itimes(N,w(Y,Ys),w(S,Ys)):-add(N,Y,S).
itimes(N,v(Y,Ys),w(PN,[Y|Ys])):-s(PN,N).    
\end{code}
They are part of a chain of {\em mutually recursive predicates} as
they are already referring to the 
{\tt add} predicate, to be implemented later.
Note also that instead of naively iterating, they implement a 
more efficient ``one bloc at a time'' algorithm. 
For instance, when detecting that its argument counts
a number of applications of {\tt o}, {\tt otimes} just increments that count.
On the other hand, when the last predicate applied was {\tt i}, {\tt otimes} simply
inserts a new count for {\tt o} operations. A similar process corresponds to 
{\tt itimes}. As a result, performance is (roughly) logarithmic rather than linear
in terms of the bitsize of argument {\tt N}. We will use this property for
implementing a low complexity multiplication by exponent of 2 operation.

We also need a number of arithmetic identities on $\N$
involving iterated applications of $o$ and $i$.
\begin{prop} \label{addeqs}
The following hold:
\begin{equation} \label{oplus}
o^k(x) + o^k(y) = i^k(x+y) 
\end{equation}
\begin{equation} \label{oiplus}
o^k(x) + i^k(y) = i^k(x)+o^k(y) = i^k(x+y+1)-1 
\end{equation}
\begin{equation} \label{iplus}
i^k(x) + i^k(y) = i^k(x+y+2)-2 
\end{equation}
\end{prop}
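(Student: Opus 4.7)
The plan is to prove all three identities of Proposition \ref{addeqs} by direct computation, reducing each to an equality of linear expressions in $2^k$ via the closed-form formulas (\ref{onk}) and (\ref{ink}) already established, namely $o^k(z) = 2^k(z+1)-1$ and $i^k(z) = 2^k(z+2)-2$. This converts a statement about iterated function composition into elementary arithmetic on $\N$ (or on $\Z$, where the intermediate manipulations live, before verifying that the final equalities hold on the intended domain).

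First I would dispatch (\ref{oplus}): substituting the closed form for $o^k$ on the left-hand side gives $\bigl(2^k(x+1)-1\bigr) + \bigl(2^k(y+1)-1\bigr) = 2^k(x+y+2) - 2$, and the closed form for $i^k$ on the right-hand side gives $i^k(x+y) = 2^k(x+y+2) - 2$, so the two sides match. Next I would handle (\ref{oiplus}) in two steps: the symmetry $o^k(x)+i^k(y) = i^k(x)+o^k(y)$ is immediate because both expand to $2^k(x+y+3) - 3$; and since $i^k(x+y+1) - 1 = 2^k(x+y+3) - 2 - 1 = 2^k(x+y+3) - 3$, the chain of equalities closes. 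For (\ref{iplus}), the same substitution gives $\bigl(2^k(x+2)-2\bigr) + \bigl(2^k(y+2)-2\bigr) = 2^k(x+y+4) - 4 = i^k(x+y+2) - 2$, as required.

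The proofs are essentially bookkeeping; the only delicate point is that $i^k(x+y+1)-1$ in (\ref{oiplus}) and $i^k(x+y+2)-2$ in (\ref{iplus}) must be interpreted in $\N$, which is fine because the leading $2^k$ factor dominates the subtracted constants for all $k\ge 0$ and nonnegative $x,y$. An alternative route would be induction on $k$, using $o(x) = 2x+1$, $i(x) = 2x+2$, and the base case $k=0$ (where both $o^0$ and $i^0$ are the identity, so (\ref{oplus})--(\ref{iplus}) become trivialities), then verifying the step via one further application of $o$ or $i$; but this is strictly more work than invoking the already-proved closed forms.

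The main obstacle, such as it is, is simply keeping the additive offsets ($-1$, $-2$, $-3$, $-4$) straight across the three identities; no conceptual difficulty arises, since Proposition \ref{fastexp} has already done the heavy lifting of turning iterated composition into a multiplicative expression in $2^k$.
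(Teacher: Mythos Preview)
Your proposal is correct and follows exactly the paper's approach: substitute the closed forms $o^k(z)=2^k(z+1)-1$ and $i^k(z)=2^k(z+2)-2$ from (\ref{onk}) and (\ref{ink}) and verify that both sides reduce to the same expression. The paper's proof is a one-line sketch of precisely this computation; you have simply written out the arithmetic in full.
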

\begin{proof}
By (\ref{onk}) and (\ref{ink}), we substitute the $2^k$-based equivalents of $o^k$ and $i^k$,
then observe that the same reduced forms appear on both sides.
\end{proof}

The corresponding Prolog code is:
\begin{code}
oplus(K,X,Y,R):-add(X,Y,S),itimes(K,S,R). 
   
oiplus(K,X,Y,R):-add(X,Y,S),s(S,S1),itimes(K,S1,T),s(R,T).

iplus(K,X,Y,R):-add(X,Y,S),s(S,S1),s(S1,S2),itimes(K,S2,T),s(P,T),s(R,P).
\end{code}
Note that the code uses {\tt add} that we will define later and
that it is part of a chain of mutually recursive predicate calls,
that together will provide an intricate but efficient implementation of
the intuitively simple idea: {\em we want to work on one run-length 
encoded block at a time}.

The corresponding identities for subtraction are:
\begin{prop} \label{subeqs}
\begin{equation}\label{ominus}
x > y ~\Rightarrow~ o^k(x) - o^k(y) = o^k(x-y-1)+1 
\end{equation}
\begin{equation}\label{oiminus}
x>y+1 ~\Rightarrow~ o^k(x) - i^k(y) = o^k(x-y-2)+2
\end{equation}
\begin{equation}\label{iominus}
x \geq y ~\Rightarrow~ i^k(x) - o^k(y) = o^k(x-y)
\end{equation}
\begin{equation}\label{iminus}
x > y ~\Rightarrow~ i^k(x) - i^k(y) = o^k(x-y-1)+1 
\end{equation}
\end{prop}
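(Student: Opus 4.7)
The plan is to mirror the strategy used for Proposition \ref{addeqs}: translate both sides of each identity to their closed-form $2^k$-based expressions using (\ref{onk}) and (\ref{ink}), then check that each side reduces to the same polynomial in $2^k$, $x$, and $y$. Since all four identities are arithmetic equalities between iterated applications of $o$ and $i$, and each of $o^k$ and $i^k$ is just a linear function of its argument scaled by $2^k$, the verifications are purely mechanical once the substitutions are made.

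Concretely, for (\ref{ominus}) I would substitute $o^k(x)=2^k(x+1)-1$ on both sides: the left becomes $2^k(x+1)-2^k(y+1)=2^k(x-y)$, and the right becomes $(2^k(x-y)-1)+1=2^k(x-y)$. For (\ref{oiminus}) the left side is $(2^k(x+1)-1)-(2^k(y+2)-2)=2^k(x-y-1)+1$, and the right side is $(2^k(x-y-1)-1)+2=2^k(x-y-1)+1$. For (\ref{iominus}) the left side is $(2^k(x+2)-2)-(2^k(y+1)-1)=2^k(x-y+1)-1$, matching $o^k(x-y)=2^k(x-y+1)-1$ on the right. For (\ref{iminus}) the left side is $(2^k(x+2)-2)-(2^k(y+2)-2)=2^k(x-y)$, and the right side is $(2^k(x-y)-1)+1=2^k(x-y)$, as before.

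The only substantive point, rather than an obstacle, is to verify that the stated hypotheses are exactly what is needed to keep the arguments of $o^k$ on the right-hand side within $\mathbb{N}$, so that the closed-form identities (\ref{onk}) and (\ref{ink}) may legitimately be applied there. For (\ref{ominus}) and (\ref{iminus}) we need $x-y-1 \geq 0$, i.e. $x > y$; for (\ref{oiminus}) we need $x-y-2 \geq 0$, i.e. $x > y+1$; and for (\ref{iominus}) we need $x-y \geq 0$, i.e. $x \geq y$. Each matches the hypothesis given in the proposition, so no additional case analysis is required. The proof can therefore be dispatched in one sentence with the same recipe as Proposition \ref{addeqs}, noting that the four stated inequalities are precisely the conditions under which all terms involved remain in $\mathbb{N}$.
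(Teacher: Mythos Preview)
Your proposal is correct and follows exactly the approach the paper takes: substitute the closed forms (\ref{onk}) and (\ref{ink}) for $o^k$ and $i^k$ and observe that both sides reduce to the same expression, with the stated hypotheses ensuring the right-hand side arguments remain in $\mathbb{N}$. If anything, your write-up is more explicit than the paper's, which dispatches all four cases in a single sentence.
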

\begin{proof}
By (\ref{onk}) and (\ref{ink}), we substitute the $2^k$-based equivalents of $o^k$ and $i^k$, 
then observe that the same reduced forms appear on both sides. Note that special cases
are handled separately to ensure that subtraction is defined.
\end{proof}
The Prolog code, also covering the special cases, is:
\begin{code}
ominus(_,X,X,e).
ominus(K,X,Y,R):-sub(X,Y,S1),s(S2,S1),otimes(K,S2,S3),s(S3,R). 
\end{code}
\begin{code}
iminus(_,X,X,e).
iminus(K,X,Y,R):-sub(X,Y,S1),s(S2,S1),otimes(K,S2,S3),s(S3,R). 
\end{code}
\begin{code}
oiminus(_,X,Y,v(e,[])):-s(Y,X).
oiminus(K,X,Y,R):-s(Y,SY),s(SY,X),exp2(K,P),s(P,R).
oiminus(K,X,Y,R):-
  sub(X,Y,S1),s(S2,S1),s(S3,S2),s_(S3), 
  otimes(K,S3,S4),s(S4,S5),s(S5,R).
\end{code}
\begin{code} 
iominus(K,X,Y,R):-sub(X,Y,S),otimes(K,S,R).
\end{code}
Note the use of {\tt sub}, to be defined later, which
is also part of the mutually recursive chain of operations.

The next two predicates extract the iterated applications of
$o^n$ and respectively $i^n$ from {\tt v} and {\tt w} terms:
\begin{code}
osplit(v(X,[]), X,e).
osplit(v(X,[Y|Xs]),X,w(Y,Xs)).
\end{code}
\begin{code}
isplit(w(X,[]), X,e).
isplit(w(X,[Y|Xs]),X,v(Y,Xs)).
\end{code}
We are now ready for defining addition. The base cases are:
\begin{code}
add(e,Y,Y).
add(X,e,X):-s_(X).
\end{code}
In the case when both terms represent odd numbers, we apply with {\tt auxAdd1}
the identity (\ref{oplus}),
after extracting the iterated applications of $o$ as {\tt a} and {\tt b} with the predicate {\tt osplit}.
\begin{code}
add(X,Y,R):-o_(X),o_(Y),osplit(X,A,As),osplit(Y,B,Bs),cmp(A,B,R1),
  auxAdd1(R1,A,As,B,Bs,R). 
\end{code}
In the case when the first term is odd and the 
second even, we apply with {\tt auxAdd2} the identity (\ref{oiplus}),
after extracting the iterated application of $o$ and $i$ as {\tt a} and {\tt b}.
\begin{code}  
add(X,Y,R):-o_(X),i_(Y),osplit(X,A,As),isplit(Y,B,Bs),cmp(A,B,R1),
  auxAdd2(R1,A,As,B,Bs,R).
\end{code}
In the case when the first term is even and the second odd, we apply with {\tt auxAdd3}
the identity (\ref{oiplus}),
after extracting the iterated applications of $i$ and $o$ as, respectively, {\tt a} and {\tt b}.
\begin{code}    
add(X,Y,R):-i_(X),o_(Y),isplit(X,A,As),osplit(Y,B,Bs),cmp(A,B,R1),
  auxAdd3(R1,A,As,B,Bs,R).
\end{code}
In the case when both terms represent even numbers, we apply with {\tt auxAdd4}
the identity (\ref{iplus}),
after extracting the iterated application of $i$ as {\tt a} and {\tt b}.
\begin{code}      
add(X,Y,R):-i_(X),i_(Y),isplit(X,A,As),isplit(Y,B,Bs),cmp(A,B,R1),
  auxAdd4(R1,A,As,B,Bs,R).
\end{code}
Note the presence of the comparison operation {\tt cmp}, to be defined later, also
part of our chain of mutually recursive operations.
Note also that in each case we ensure that a block of the same
size is extracted, depending on which of the two operands {\tt a} or {\tt b} is larger.
Beside that, the auxiliary predicates {\tt auxAdd1, auxAdd2, auxAdd3} and {\tt auxAdd4}
implement the equations of Prop. \ref{addeqs}.
\begin{code}
auxAdd1('=',A,As,_B,Bs,R):- s(A,SA),oplus(SA,As,Bs,R).
auxAdd1('>',A,As,B,Bs,R):-
  s(B,SB),sub(A,B,S),otimes(S,As,R1),oplus(SB,R1,Bs,R).
auxAdd1('<',A,As,B,Bs,R):-
  s(A,SA),sub(B,A,S),otimes(S,Bs,R1),oplus(SA,As,R1,R).  
\end{code}
\begin{code}
auxAdd2('=',A,As,_B,Bs,R):- s(A,SA),oiplus(SA,As,Bs,R).
auxAdd2('>',A,As,B,Bs,R):-
  s(B,SB),sub(A,B,S),otimes(S,As,R1),oiplus(SB,R1,Bs,R).
auxAdd2('<',A,As,B,Bs,R):-
  s(A,SA),sub(B,A,S),itimes(S,Bs,R1),oiplus(SA,As,R1,R).  
\end{code}
\begin{code}  
auxAdd3('=',A,As,_B,Bs,R):- s(A,SA),oiplus(SA,As,Bs,R).
auxAdd3('>',A,As,B,Bs,R):-
  s(B,SB),sub(A,B,S),itimes(S,As,R1),oiplus(SB,R1,Bs,R).
auxAdd3('<',A,As,B,Bs,R):-
  s(A,SA),sub(B,A,S),otimes(S,Bs,R1),oiplus(SA,As,R1,R).    
\end{code}
\begin{code}
auxAdd4('=',A,As,_B,Bs,R):- s(A,SA),iplus(SA,As,Bs,R).
auxAdd4('>',A,As,B,Bs,R):-
  s(B,SB),sub(A,B,S),itimes(S,As,R1),iplus(SB,R1,Bs,R).
auxAdd4('<',A,As,B,Bs,R):-
  s(A,SA),sub(B,A,S),itimes(S,Bs,R1),iplus(SA,As,R1,R).    
\end{code}

The code for the subtraction predicate {\tt sub} is similar:
\begin{code}
sub(X,e,X).
sub(X,Y,R):-o_(X),o_(Y),osplit(X,A,As),osplit(Y,B,Bs),cmp(A,B,R1),
  auxSub1(R1,A,As,B,Bs,R). 
\end{code}
In the case when both terms represent odd numbers, we apply the identity (\ref{ominus}),
after extracting the iterated applications of $o$ as {\tt a} and {\tt b}.
For the other cases, we use, respectively, the identities \ref{oiminus}, \ref{iominus} and
\ref{iminus}:
\begin{code}  
sub(X,Y,R):-o_(X),i_(Y),osplit(X,A,As),isplit(Y,B,Bs),cmp(A,B,R1),
  auxSub2(R1,A,As,B,Bs,R).
\end{code}
\begin{code}  
sub(X,Y,R):-i_(X),o_(Y),isplit(X,A,As),osplit(Y,B,Bs),cmp(A,B,R1),
  auxSub3(R1,A,As,B,Bs,R).
\end{code}
\begin{code} 
sub(X,Y,R):-i_(X),i_(Y),isplit(X,A,As),isplit(Y,B,Bs),cmp(A,B,R1),
  auxSub4(R1,A,As,B,Bs,R).
\end{code}
Note also the auxiliary predicates {\tt auxSub1, auxSub2, auxSub3} and {\tt auxSub4} that
implement the equations of Prop. \ref{subeqs}.

\begin{code}
auxSub1('=',A,As,_B,Bs,R):- s(A,SA),ominus(SA,As,Bs,R).
auxSub1('>',A,As,B,Bs,R):-
  s(B,SB),sub(A,B,S),otimes(S,As,R1),ominus(SB,R1,Bs,R).
auxSub1('<',A,As,B,Bs,R):-
  s(A,SA),sub(B,A,S),otimes(S,Bs,R1),ominus(SA,As,R1,R).  
\end{code}
\begin{code}
auxSub2('=',A,As,_B,Bs,R):- s(A,SA),oiminus(SA,As,Bs,R).
auxSub2('>',A,As,B,Bs,R):-
  s(B,SB),sub(A,B,S),otimes(S,As,R1),oiminus(SB,R1,Bs,R).
auxSub2('<',A,As,B,Bs,R):-
  s(A,SA),sub(B,A,S),itimes(S,Bs,R1),oiminus(SA,As,R1,R).  
\end{code}
\begin{code}  
auxSub3('=',A,As,_B,Bs,R):- s(A,SA),iominus(SA,As,Bs,R).
auxSub3('>',A,As,B,Bs,R):-
  s(B,SB),sub(A,B,S),itimes(S,As,R1),iominus(SB,R1,Bs,R).
auxSub3('<',A,As,B,Bs,R):-
  s(A,SA),sub(B,A,S),otimes(S,Bs,R1),iominus(SA,As,R1,R).    
\end{code}
\begin{code}
auxSub4('=',A,As,_B,Bs,R):- s(A,SA),iminus(SA,As,Bs,R).
auxSub4('>',A,As,B,Bs,R):-
  s(B,SB),sub(A,B,S),itimes(S,As,R1),iminus(SB,R1,Bs,R).
auxSub4('<',A,As,B,Bs,R):-
  s(A,SA),sub(B,A,S),itimes(S,Bs,R1),iminus(SA,As,R1,R).    
\end{code}

\subsection{Defining a total order: comparison}
The comparison operation {\tt cmp}
provides a total order (isomorphic to that on $\N$) on our type $\T$.
It relies on {\tt bitsize} computing the number of applications of
$o$ and $i$ that build a term in $\T$,
which is also part of our mutually recursive predicates, to be defined later.

We first observe that only terms of the same  bitsize
 need detailed comparison,
otherwise the relation between their bitsizes is enough, {\em recursively}.
More precisely, the following holds:
\begin{prop} \label{bitineq}
Let {\tt bitsize} count the number of applications of $o$ or $i$ operations
on a bijective base-2 number. 
Then {\tt bitsize}$(x)~<~${\tt bitsize}$(y) \Rightarrow x<y$.
\end{prop}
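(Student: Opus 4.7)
The plan is to show that bijective base-2 numbers of bitsize exactly $n \geq 1$ occupy the interval $[2^n - 1,\ 2^{n+1} - 2]$ of $\N$, and that these intervals are pairwise disjoint and strictly ordered by $n$; the stated implication then follows immediately, since two numbers with different bitsizes sit in non-overlapping intervals whose order reflects the order of the bitsizes.

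First I would invoke identities (\ref{onk0}) and (\ref{ink0}), which give $o^n(0) = 2^n - 1$ and $i^n(0) = 2^{n+1} - 2$. Since $o(k) = 2k + 1$ and $i(k) = 2k + 2$ are both strictly monotone in $k$ and satisfy $o(k) < i(k)$ for every $k \in \N$, an easy induction on the length of a composition shows that among all numbers built by exactly $n$ applications of $o$ or $i$ starting from $0$, $o^n(0)$ is the smallest and $i^n(0)$ the largest: replacing any $i$-step in such a composition by an $o$-step can only decrease the running value, and the symmetric replacement can only increase it. Combining this with the identities, induction on $n$ yields that every bitsize-$n$ number $x$ satisfies $2^n - 1 \leq x \leq 2^{n+1} - 2$.

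Finally, for $1 \leq m < n$ the maximum bitsize-$m$ value satisfies $2^{m+1} - 2 \leq 2^n - 2 < 2^n - 1$, the minimum bitsize-$n$ value, so any bitsize-$m$ number is strictly less than any bitsize-$n$ number; the degenerate case $m = 0$ forces $x = 0$, which is dominated by any positive $y$ of bitsize $\geq 1$. I do not expect any real obstacle in this argument; the only delicate point is to carry the $-1$ and $-2$ offsets faithfully through the induction so that consecutive bitsize intervals are adjacent but disjoint, and this is made automatic by the precise form of identities (\ref{onk0}) and (\ref{ink0}).
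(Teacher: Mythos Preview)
Your argument is correct, but it takes a more explicit and computational route than the paper. The paper's proof is a one-liner: it simply observes that the bitsize function is non-decreasing on $\N$ (equivalently, that in ``big digit first'' form bijective base-2 representations are lexicographically ordered), so by contraposition $\texttt{bitsize}(x) < \texttt{bitsize}(y)$ forces $x < y$. You instead pin down the exact range $[2^n - 1,\ 2^{n+1} - 2]$ of numbers with bitsize $n$ via identities (\ref{onk0}) and (\ref{ink0}) together with the monotonicity of $o$ and $i$, and then check that consecutive ranges are disjoint and ordered. Your approach buys concrete quantitative information (the precise endpoints, and the fact that the ranges are in fact contiguous), at the cost of a short induction; the paper's approach is faster but leaves those details implicit. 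Both are perfectly valid for the stated proposition.
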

\begin{proof}
Observe that, given their lexicographic ordering in ``big digit first'' form, the
bitsize of bijective base-2 numbers is a non-decreasing function.
\end{proof}
\begin{code}
cmp(e,e,'=').
cmp(e,Y,('<')):-s_(Y).
cmp(X,e,('>')):-s_(X).
cmp(X,Y,R):-s_(X),s_(Y),bitsize(X,X1),bitsize(Y,Y1),
  cmp1(X1,Y1,X,Y,R).

cmp1(X1,Y1,_,_,R):- \+(X1=Y1),cmp(X1,Y1,R).
cmp1(X1,X1,X,Y,R):-reversedDual(X,RX),reversedDual(Y,RY),
  compBigFirst(RX,RY,R).
\end{code}
The predicate {\tt compBigFirst} compares two terms known to have the same
{\tt bitsize}. It works on reversed (big digit first) variants,
computed by {\tt reversedDual} and it takes  advantage of the
block structure using the following proposition:
\begin{prop}
Assuming two terms of the same bitsizes, the one starting with
$i$ is larger than one starting with $o$.
\end{prop}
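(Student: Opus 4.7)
The plan is to reduce the claim to the underlying bijective base-2 representation and then bound each side by the extremal value compatible with its leading digit.

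First, I would fix notation. By Prop.~\ref{fastexp} and the intuition behind $\T$, each term of type $\T$ unfolds (via the operations $o$ and $i$) into a bijective base-2 string of length $n$, where $n$ is precisely the {\tt bitsize} of the term. If we write the string in big-digit-first order (the form produced by {\tt reversedDual}), the number it represents is $\sum_{j=0}^{n-1} d_j 2^j$, with $d_j=1$ when the $j$-th digit is $o$ and $d_j=2$ when it is $i$. ``Starts with $o$'' (resp. ``starts with $i$'') in the statement thus means $d_{n-1}=1$ (resp. $d_{n-1}=2$) in this expansion.

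Next, let $x$ and $y$ be the two terms of common bitsize $n$, with $x$ starting with $o$ and $y$ starting with $i$. Since every remaining digit lies in $\{1,2\}$, I would bound
\begin{equation}
y \;\geq\; 2\cdot 2^{n-1} + \sum_{j=0}^{n-2} 1\cdot 2^j \;=\; 2^n + 2^{n-1} - 1,
\end{equation}
achieved when all lower digits of $y$ are $o$, and dually
\begin{equation}
x \;\leq\; 1\cdot 2^{n-1} + \sum_{j=0}^{n-2} 2\cdot 2^j \;=\; 3\cdot 2^{n-1} - 2,
\end{equation}
achieved when all lower digits of $x$ are $i$. Subtracting,
\begin{equation}
y - x \;\geq\; (2^n + 2^{n-1} - 1) - (3\cdot 2^{n-1} - 2) \;=\; 1 \;>\; 0,
\end{equation}
so $y > x$, as required.

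The only non-routine step is justifying the translation from the hereditarily binary tree to the bijective base-2 digit string and, in particular, confirming that ``starts with'' in the proposition refers to the most significant digit after {\tt reversedDual}. This is a routine structural induction on $\T$ using the defining cases of $v$ and $w$ (each $v$-bloc contributes a run of $o$'s, each $w$-bloc a run of $i$'s, and {\tt reversedDual} reverses the bloc order), so no genuine obstacle arises; once the correspondence is in hand, the arithmetic bound above closes the argument.
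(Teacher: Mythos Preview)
Your argument is correct. The arithmetic bounds check out: with common bitsize $n$, the largest $o$-headed number is $3\cdot 2^{n-1}-2$ and the smallest $i$-headed number is $3\cdot 2^{n-1}-1$, giving a gap of exactly $1$.

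The paper takes a different and terser route: it simply observes that bijective base-$2$ numbers of equal length, written big-digit-first, are ordered \emph{lexicographically} with $o<i$; the proposition is then an immediate instance of that general principle. Your approach instead isolates the specific case at hand and computes explicit extremal values for each side. The paper's formulation buys a reusable lemma (the full lexicographic order, which is what {\tt compBigFirst} actually implements when it recurses past identical leading blocks), whereas your extremal-bound argument is self-contained and makes the quantitative gap visible without ever invoking the lexicographic notion. Both are short; yours is more explicit, the paper's more conceptual.
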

\begin{proof}
Observe that ``big digit first'' numbers are lexicographically ordered with $o<i$.
\end{proof}

As a consequence, {\tt cmp} only recurses when {\em identical} blocks
head the sequence of blocks, otherwise it infers the ``{\tt <}'' or ``{\tt >}''
relation. 
\begin{code}  
compBigFirst(e,e,'=').
compBigFirst(X,Y,R):- o_(X),o_(Y),
  osplit(X,A,C),osplit(Y,B,D),cmp(A,B,R1),
  fcomp1(R1,C,D,R). 
compBigFirst(X,Y,R):-i_(X),i_(Y),
  isplit(X,A,C),isplit(Y,B,D),cmp(A,B,R1),
  fcomp2(R1,C,D,R).
compBigFirst(X,Y,('<')):-o_(X),i_(Y).
compBigFirst(X,Y,('>')):-i_(X),o_(Y).
\end{code}
\begin{code}
fcomp1('=',C,D,R):-compBigFirst(C,D,R).
fcomp1('<',_,_,'>').
fcomp1('>',_,_,'<').   
\end{code}
\begin{code}
fcomp2('=',C,D,R):-compBigFirst(C,D,R).
fcomp2('<',_,_,'<').
fcomp2('>',_,_,'>').
\end{code}
The predicate {\tt reversedDual} reverses the order of application
of the $o$ and $i$ operations to a ``biggest digit
first'' order. For this, it only needs to reverse
the order of the alternative blocks of $o^k$ and $i^k$.
It uses the predicate {\tt len} to compute with {\tt auxRev1} and {\tt auxRev2}
the number of these
blocks. Then, it infers that if the number of blocks is odd, the last block is of the same
kind as the first; otherwise it is of its alternate kind ({\tt w} for {\tt v} and vice versa).
\begin{code}
reversedDual(e,e).
reversedDual(v(X,Xs),R):-reverse([X|Xs],[Y|Ys]),len([X|Xs],L),
  auxRev1(L,Y,Ys,R).
reversedDual(w(X,Xs),R):-reverse([X|Xs],[Y|Ys]),len([X|Xs],L),
  auxRev2(L,Y,Ys,R).
\end{code}
\begin{code}
auxRev1(L,Y,Ys,R):-o_(L),R=v(Y,Ys).
auxRev1(L,Y,Ys,R):-i_(L),R=w(Y,Ys).
\end{code}
\begin{code}  
auxRev2(L,Y,Ys,R):-o_(L),R=w(Y,Ys).
auxRev2(L,Y,Ys,R):-i_(L),R=v(Y,Ys).  
\end{code}
\begin{code}  
len([],e).
len([_|Xs],L):- len(Xs,L1),s(L1,L).
\end{code}

\subsection{Computing {\tt bitsize}}

The predicate {\tt bitsize} computes  the number of applications of the 
{\tt o} and {\tt i} operations. It works by summing up 
the {\em counts} of {\tt o} and {\tt i} operations
composing a tree-represented natural number of type $\T$.
\begin{code}
bitsize(e,e).
bitsize(v(X,Xs),R):-tsum([X|Xs],e,R).
bitsize(w(X,Xs),R):-tsum([X|Xs],e,R).
\end{code}
\begin{code}  
tsum([],S,S).
tsum([X|Xs],S1,S3):-add(S1,X,S),s(S,S2),tsum(Xs,S2,S3).
\end{code}
{\tt Bitsize} concludes our chain of {\em mutually recursive} predicates.
Note that it also provides an efficient implementation
of the integer $log_2$ operation {\tt ilog2}.
\begin{code}
ilog2(X,R):-s(PX,X),bitsize(PX,R).
\end{code}

\subsection{Fast multiplication by an exponent of 2}

The predicate {\tt leftshiftBy} uses Prop. \ref{fastexp}, i.e.,
the fact that repeated application of the {\tt o} operation ({\tt otimes}) 
provides an efficient implementation of 
multiplication with an exponent of 2.
\begin{code}
leftShiftBy(_,e,e).
leftShiftBy(N,K,R):-s(PK,K),otimes(N,PK,M),s(M,R).
\end{code}

The following holds:
\begin{prop}
Assuming {\tt s} constant time,
{\tt leftshiftBy} is (roughly) logarithmic in the bitsize of 
its arguments.
\end{prop}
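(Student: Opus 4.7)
My plan is to unfold the body of \texttt{leftShiftBy} and then lean on the cost analyses already established for \texttt{s} and \texttt{otimes}. The only non-base clause is
\[
  \texttt{leftShiftBy}(N,K,R) \;\leftarrow\; \texttt{s}(PK,K),\ \texttt{otimes}(N,PK,M),\ \texttt{s}(M,R),
\]
so apart from two calls to \texttt{s} (constant by hypothesis), the entire cost reduces to a single call to \texttt{otimes}(N, PK, M). The first step, therefore, is to reduce the claim to showing that \texttt{otimes} is (roughly) logarithmic in the bitsize of its inputs.

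For \texttt{otimes}, I would case-split on the shape of its second argument \emph{PK}. Three of the four clauses do only constant work, executing at most one additional call to \texttt{s}: the base case returns \emph{Y} directly, the \texttt{e}-clause synthesizes \texttt{v(PN,[])}, and the \texttt{w(Y,Ys)}-clause synthesizes \texttt{v(PN,[Y|Ys])}. The remaining clause, applicable when \emph{PK} has shape \texttt{v(Y,Ys)}, performs one call to \texttt{add}(N, Y, S); this is the only place where nontrivial work is incurred. Since \emph{Y} is the counter of a run-length block inside \emph{PK}, it is a hereditarily binary number whose \emph{value} is bounded by the bitsize of \emph{K}, and hence whose own structural size is one hereditary level smaller. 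By the hereditary design of the representation, the bitsize of \emph{Y} is therefore logarithmic in the bitsize of \emph{K}, and similarly for \emph{N} vs.\ its own bitsize.

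Next I would invoke the analysis to be developed in Section~\ref{stru}: \texttt{add} on two hereditarily binary numbers runs in time proportional to (essentially) their structural complexity, because each mutually recursive step either merges matching blocks at the outermost level or strictly decreases an appropriate well-founded measure on the pair (block count, counter size). Applied here, the arguments to the internal \texttt{add} have structural complexity logarithmic in the bitsizes of the original \emph{N} and \emph{K}, so the whole \texttt{otimes} call costs $O(\log(\mathrm{bitsize}(N)+\mathrm{bitsize}(K)))$ under the assumption that \texttt{s} is constant. Plugging this back into the unfolding of \texttt{leftShiftBy} gives the claim.

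The main obstacle is justifying the logarithmic bound on the internal \texttt{add} call without circularity, since \texttt{add}, \texttt{otimes}, \texttt{cmp} and \texttt{bitsize} form a tight mutually recursive cluster. I would handle this by fixing a lexicographic measure—for example, (structural complexity of the larger argument, number of outermost blocks)—that strictly decreases along every recursive descent in the cluster, then arguing inductively that recursion depth is bounded by iterated logarithms of the input bitsizes. Once that descent lemma is in hand, the present proposition is an immediate corollary.
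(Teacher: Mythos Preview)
Your skeleton is exactly the paper's: unfold \texttt{leftShiftBy} to two \texttt{s} calls plus one \texttt{otimes}, then note that the only nontrivial clause of \texttt{otimes} performs a single call \texttt{add(N,Y,S)}. The paper's entire proof is the one-line observation ``at most one addition on data logarithmic in the bitsize of the operands is performed''; your discussion of the mutual-recursion cluster and the proposed lexicographic descent measure goes considerably beyond what the paper supplies.

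There is, however, a slip in your size accounting. You write that ``the arguments to the internal \texttt{add} have structural complexity logarithmic in the bitsizes of the original $N$ and $K$'', and conclude a bound $O(\log(\mathrm{bitsize}(N)+\mathrm{bitsize}(K)))$. That holds for $Y$ (its \emph{value} is at most $\mathrm{bitsize}(K)$, so its bitsize is at most $\log\mathrm{bitsize}(K)$), but not for the first argument: $N$ is passed to \texttt{add} unchanged, so its structural complexity is $\mathrm{tsize}(N)\le\mathrm{bitsize}(N)$, not $\log\mathrm{bitsize}(N)$. The phrase ``and similarly for $N$ vs.\ its own bitsize'' conflates $n(N)$ with $\mathrm{bitsize}(N)$. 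The paper's wording avoids this by taking ``the operands'' to mean $K$ and $2^{n(N)}$ (the factors of the product being computed): then $n(N)\approx\mathrm{bitsize}(2^{n(N)})$, so $\mathrm{bitsize}(N)\approx\log\mathrm{bitsize}(2^{n(N)})$, and both inputs to the internal \texttt{add} are indeed logarithmic in the bitsizes of \emph{those} operands. With that reading your argument and the paper's coincide; just be careful which ``arguments'' you mean when you state the final bound.
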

\begin{proof}
It follows by observing that at most one addition on data
logarithmic in the bitsize of the operands is performed.
\end{proof}

\section{Structural complexity} \label{stru}

As a measure of structural complexity we define
the predicate {\tt tsize} that counts the nodes
of a tree of type $\T$ (except the root).
\begin{code}
tsize(e,e).
tsize(v(X,Xs),R):- tsizes([X|Xs],e,R).
tsize(w(X,Xs),R):- tsizes([X|Xs],e,R).
\end{code}
\begin{code}
tsizes([],S,S).
tsizes([X|Xs],S1,S4):-tsize(X,N),add(S1,N,S2),s(S2,S3),tsizes(Xs,S3,S4).
\end{code}
It corresponds to the function  $c:\T \to \N$
 defined by equation (\ref{tsize}):
\begin{equation} \label{tsize}
c(T)=
\begin{cases}
0&  \text{if $\text{\tt T}=~$e},\\
 \sum_{Y \in \text{\tt [X|Xs]}} {(1+c(Y))} &  \text{if T = \text{\tt v(X,Xs)}},\\
\sum_{Y \in \text{\tt [X|Xs]}} {(1+c(Y))} &  \text{if T = \text{\tt w(X,Xs)}} .
\end{cases}
\end{equation}
The following holds:
\begin{prop} \label{bitcmp}
For all terms $T \in \T$, {\tt tsize(T)} $\leq$ {\tt bitsize(T)}.
\end{prop}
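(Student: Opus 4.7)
The plan is to prove the inequality by structural induction on $T \in \T$, comparing both sides in $\N$ via the bijection $n$. Since {\tt cmp} realizes an order on $\T$ isomorphic to that on $\N$, showing $n(\text{\tt tsize}(T)) \leq n(\text{\tt bitsize}(T))$ suffices. The base case $T = \text{\tt e}$ is immediate: both predicates return {\tt e}, which $n$ sends to $0$.

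For the inductive step, take $T = \text{\tt v}(X,[X_1,\ldots,X_k])$; the case $T=\text{\tt w}(X,[X_1,\ldots,X_k])$ is symmetric since both branches of {\tt tsize} and of {\tt bitsize} share the same recursive body. Unfolding equation (\ref{tsize}) gives
\[
c(T) \;=\; \sum_{i=0}^{k} \bigl(1 + c(X_i)\bigr),
\]
writing $X_0 := X$. On the other side, the {\tt tsum} accumulator pattern in the clause for {\tt bitsize} adds $X_i$ to the running total and then applies {\tt s}, so, using the correctness of {\tt add} and {\tt s} already established,
\[
n(\text{\tt bitsize}(T)) \;=\; \sum_{i=0}^{k} \bigl(n(X_i) + 1\bigr).
\]
Subtracting term by term, the two sums differ by $\sum_{i=0}^{k}\bigl(n(X_i) - c(X_i)\bigr)$, so the desired inequality reduces to showing $c(X_i) \leq n(X_i)$ for each $i$.

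Each $X_i$ is a proper subterm of $T$, hence a strictly smaller element of $\T$, so the induction hypothesis yields $c(X_i) \leq n(X_i)$ for every $i \in \{0,\ldots,k\}$, completing the step. The main (minor) obstacle is the bookkeeping needed to justify that the tree-level recursion of {\tt bitsize} through {\tt tsum} really does compute the $\N$-valued sum $\sum_{i=0}^{k}(n(X_i)+1)$; this is a direct consequence of the already-established correctness of {\tt add} and {\tt s} on $\T$ together with the commutativity and associativity of addition in $\N$, but it is worth stating explicitly so that the term-by-term comparison is unambiguous. Everything else is a routine structural induction.
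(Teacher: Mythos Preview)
Your approach---structural induction comparing the two accumulator recursions {\tt tsizes} and {\tt tsum}---is the same as the paper's, but there is a genuine gap in your inductive step. You assert that ``the induction hypothesis yields $c(X_i)\le n(X_i)$''. It does not: the statement under induction is $c(T)\le n(\text{\tt bitsize}(T))$, so instantiating it at a proper subterm gives only $c(X_i)\le n(\text{\tt bitsize}(X_i))$. Your term-by-term comparison, however, requires $c(X_i)\le n(X_i)$, because {\tt tsum} accumulates the subterm $X_i$ itself, whereas {\tt tsizes} accumulates {\tt tsize}$(X_i)$; nowhere does {\tt bitsize}$(X_i)$ appear on the {\tt bitsize} side.

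The repair is short but cannot be omitted. Establish the auxiliary inequality $n(\text{\tt bitsize}(Y))\le n(Y)$ for every $Y\in\T$: in bijective base~$2$ a $k$-digit number is at least $o^k(0)=2^k-1\ge k$, so the bit-length of a natural number never exceeds the number itself. With this in hand you can chain
\[
c(X_i)\;\le\;n(\text{\tt bitsize}(X_i))\;\le\;n(X_i),
\]
and the rest of your argument goes through verbatim. (The paper's own one-sentence proof is, if anything, vaguer on precisely this point---``corresponding calls to {\tt tsize} return terms assumed smaller than those of {\tt bitsize}'' glosses over the same missing link---so once patched your write-up is the more explicit of the two.)
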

\begin{proof}
By induction on the structure of $T$, by observing that the two
predicates have similar definitions and corresponding calls to
{\tt tsize} return terms assumed smaller than those of
{\tt bitsize}.
\end{proof}

The following example illustrates their use:
\begin{codex}
?- t(123456,T),tsize(T,S1),n(S1,TSize),bitsize(T,S2),n(S2,BSize).
T = w(e, [w(e, [e]), e, v(e, []), e, w(e, []), w(e, [])]),
S1 = w(e, [e, e]), TSize = 12,
S2 = w(e, [w(e, [])]), BSize = 16 .
\end{codex}

worst

After defining the predicate {\tt iterated}, that applies {\tt K} times the predicate {\tt F} 
\begin{code}  
iterated(_,e,X,X).
iterated(F,K,X,R):-s(PK,K),iterated(F,PK,X,R1),call(F,R1,R).
\end{code}
we can exhibit a best case, of minimal structural complexity for its size
\begin{code}
bestCase(K,Best):-iterated(wtree,K,e,Best). 

wtree(X,w(X,[])).
\end{code}
and a worst case, of maximal structural complexity for its size
\begin{code}
worstCase(K,Worst):-iterated(io,K,e,Worst).

io(X,Z):-o(X,Y),i(Y,Z).
\end{code}
The following examples illustrate these predicates:
\begin{codex}
?- t(3,T),bestCase(T,Best),n(Best,N).
T = v(v(e, []), []), Best = w(w(w(e, []), []), []), N = 65534 .

?- t(3,T),worstCase(T,Worst),n(Worst,N).
T = v(v(e, []), []), Worst = w(e, [e, e, e, e, e]), N = 84 .
\end{codex}

It follows from identity (\ref{ink0}) that the predicate {\tt bestCase } computes the iterated
exponent of 2 (tetration) and then applies the predecessor
to it twice, i.e., it computes $2^{2^{\ldots 2}}-2$. 
A simple closed formula (easy to proof by induction) can also be found for {\tt worstCase}: 
\begin{prop}
The predicate {\tt worstCase k} computes the value in $\T$
corresponding to the value ${{4(4^{k} - 1)} \over 3} \in \N$.
\end{prop}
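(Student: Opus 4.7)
The plan is to unfold the definition of \texttt{worstCase} into a purely arithmetic recurrence on $\N$, then solve it by a one-line induction. By inspection of the code, \texttt{worstCase(K,Worst)} applies the helper \texttt{io} exactly $K$ times to \texttt{e}; and \texttt{io(X,Z)} is the composition $\texttt{i} \circ \texttt{o}$. Using the arithmetic meanings $o(x) = 2x+1$ and $i(x) = 2x+2$ established in Section~\ref{bijbin}, the composed map is $x \mapsto i(o(x)) = 2(2x+1) + 2 = 4x + 4$. Hence, writing $a_k$ for the natural number represented by the tree returned by \texttt{worstCase(k,\_)}, I would show that $a_0 = 0$ and $a_{k+1} = 4 a_k + 4$.

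Next I would solve this recurrence by induction on $k$. The base case $k=0$ is immediate since $\tfrac{4(4^0-1)}{3} = 0 = a_0$. For the inductive step, assuming $a_k = \tfrac{4(4^k-1)}{3}$, a direct calculation gives
\begin{equation*}
a_{k+1} = 4 a_k + 4 = \frac{16(4^k - 1)}{3} + 4 = \frac{16 \cdot 4^k - 4}{3} = \frac{4(4^{k+1} - 1)}{3},
\end{equation*}
closing the induction.

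The only non-routine point is ensuring that the translation from the Prolog predicate to the recurrence is faithful: I would justify this by noting that \texttt{iterated(F,K,X,R)} is defined so that $R$ is the $K$-fold application of $F$ to $X$, that $n(\texttt{e}) = 0$ by equation~(\ref{deft}), and that the predicates \texttt{o} and \texttt{i} are sound emulations of the arithmetic functions $o$ and $i$ (as argued in Section~\ref{emu}). Once those identifications are made, no further obstacle remains, and the closed-form expression $\tfrac{4(4^k-1)}{3}$ follows.
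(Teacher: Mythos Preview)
Your proof is correct and follows exactly the approach the paper indicates: the paper merely states that the closed form is ``easy to [prove] by induction'' without giving details, and your argument---reading off the recurrence $a_{k+1}=4a_k+4$ from the composition $i\circ o$ and closing it by a straightforward induction---is precisely the intended one. Nothing further is needed.
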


The average space-complexity of our number representation
is related to the average length of the {\em integer partitions of
the bitsize of a number} \cite{part99}. Intuitively, the shorter the
partition in alternative blocks of $o$ and $i$ applications,
the more significant the compression is, but the exact study,
given the recursive application of run-length encoding,
is likely to be quite intricate.

The following example shows that computations with
towers of exponents 20 and 30 levels tall become
possible with our number representation.
\begin{codex}
?- t(20,X),bestCase(X,A),t(30,Y),bestCase(Y,B),add(A,B,C),
|       tsize(C,S),n(S,TSize),write(TSize),nl,fail.
314
\end{codex}
Note that the structural complexity of the result (that we did not print out) is
still quite manageable: {\tt 250}. {\em This opens the door to a new world
where tractability of computations is not limited by the size of the operands 
but only by their structural complexity.}

\section{Related work} \label{related}

Several notations for very large numbers have been invented in the past. Examples
include Knuth's {\em arrow-up} notation \cite{knuthUp}
covering operations like the {\em tetration} (a notation for towers of exponents).
In contrast to our tree-based natural numbers,
such notations are not closed under
addition and multiplication, and consequently
they cannot be used as a replacement
for ordinary binary or decimal numbers.

The first instance of a hereditary number system, at our best knowledge,
occurs in the proof of Goodstein's theorem \cite{goodstein}, where
replacement of finite numbers on a tree's branches by the ordinal $\omega$
allows him to prove that a ``hailstone sequence'' visiting arbitrarily
large numbers eventually turns around and terminates.

Arithmetic packages similar to 
our bijective base-2 view of arithmetic operations
are part of libraries of proof assistants
like Coq \cite{Coq:manual}.

Arithmetic computations based
on recursive data types like
the free magma of binary trees
(isomorphic to the 
context-free language of balanced parentheses)
are described in \cite{sac12},
where they are seen as G\"odel's {\tt System T} types,
as well as combinator application trees.
In \cite{ppdp10tarau}
a type class mechanism is used
to express computations on hereditarily
finite sets and hereditarily finite
sequences.
In \cite{vu09} integer decision diagrams are introduced
providing a compressed representation for sparse
integers, sets and various other data types. 

\section{Conclusion and future work} \label{concl}

We have shown that {\em computations} like
addition, subtraction, exponent of 2 and
 bitsize can be performed with
giant numbers in constant time or time
proportional to their structural complexity rather
than their bitsize.
As {\em structural
complexity} is bounded by bitsize,
our computations are within constant time from
their traditional counterparts, as also
illustrated by our best and worst case
complexity cases.

The fundamental theoretical challenge raised at this point is the following:
{\em can more number-theoretically interesting operations
expressed succinctly in terms of our tree-based data type? Is it possible to reduce the
complexity of some other important operations, besides those found so far?}

The general multiplication algorithm in the {\tt Appendix} shows a first step in 
that direction.


\section*{Appendix}

\subsection*{Reduced complexity general multiplication}

We can devise a similar optimization as for {\tt add} and {\tt sub} for multiplication
\begin{prop}
The following holds:
\begin{equation} \label{muleq}
o^n(a)o^m(b)=o^{n+m}(ab+a+b)-o^n(a)-o^m(b)
\end{equation}
\end{prop}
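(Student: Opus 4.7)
The plan is to prove the identity by direct algebraic substitution, exactly as was done for Propositions \ref{addeqs} and \ref{subeqs}. The key tool is the closed form from equation (\ref{onk}), namely $o^n(k) = 2^n(k+1) - 1$, which converts all iterated-$o$ expressions into polynomial expressions in powers of $2$.

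First I would expand the left-hand side by writing $o^n(a) = 2^n(a+1) - 1$ and $o^m(b) = 2^m(b+1) - 1$, then multiply the two binomials to obtain a sum of four terms involving $2^{n+m}(a+1)(b+1)$, $-2^n(a+1)$, $-2^m(b+1)$, and $+1$. I would then rewrite $(a+1)(b+1)$ as $ab+a+b+1$, so that the leading term becomes $2^{n+m}(ab+a+b+1)$.

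Next I would expand the right-hand side using the same identity: $o^{n+m}(ab+a+b) = 2^{n+m}(ab+a+b+1) - 1$, and subtract $o^n(a) = 2^n(a+1)-1$ and $o^m(b) = 2^m(b+1)-1$. The three $-1$ terms combine to $-1 + 1 + 1 = +1$, giving $2^{n+m}(ab+a+b+1) - 2^n(a+1) - 2^m(b+1) + 1$, which matches the expanded left-hand side term-for-term.

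The proof is a one-line verification once the substitution is performed, so there is essentially no obstacle; the only thing to check is that the subtraction on the right is well-defined over $\N$, i.e. that $o^{n+m}(ab+a+b) \geq o^n(a) + o^m(b)$. This can be seen from the expanded form, since $2^{n+m}(ab+a+b+1) \geq 2^n(a+1) + 2^m(b+1)$ whenever $n,m \geq 0$ (with equality excluded as soon as either exponent is positive), so the identity makes sense as an identity on natural numbers in the regime where it will be applied.
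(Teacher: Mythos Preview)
Your proposal is correct and follows essentially the same approach as the paper: both proofs substitute the closed form $o^n(k)=2^n(k+1)-1$ from equation~(\ref{onk}) and verify the identity by elementary algebra, with the only cosmetic difference that the paper manipulates the left-hand side forward into the right-hand side while you expand both sides and compare. Your additional remark about well-definedness of the subtraction over $\N$ is a welcome observation that the paper does not make explicit here.
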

\begin{proof}
By \ref{onk}, we can expand and then reduce as follows:
$o^n(a)o^m(b) = 
(2^n(a+1)-1)(2^m(b+1)-1)=
2^{n+m}(a+1)(b+1)-(2^n(a+1)+2^m(b+1)+1=
2^{n+m}(a+1)(b+1)-1-(2^n(a+1)-1+2^m(b+1)-1+2)+2=
o^{n+m}(ab+a+b+ 1)-(o^n(a)+o^m(b))-2+2=
o^{n+m}(ab+a+b)-o^n(a)-o^m(b)$
\end{proof}

The corresponding Prolog code starts with the obvious base cases:
\begin{code}
mul(_,e,e).
mul(e,Y,e):-s_(Y).
\end{code}
When both terms represent odd numbers we apply the identity (\ref{muleq}):
\begin{code}
mul(X,Y,R):-o_(X),o_(Y),osplit(X,N,A),osplit(Y,M,B),
  add(A,B,S),mul(A,B,P),add(S,P,P1),s(N,SN),s(M,SM),
  add(SN,SM,K),otimes(K,P1,P2),sub(P2,X,R1),sub(R1,Y,R).
\end{code}
The other cases are reduced to the previous one by the identity
$i=s \circ o$.
\begin{code}  
mul(X,Y,R):-o_(X),i_(Y),s(PY,Y),mul(X,PY,Z),add(X,Z,R).
mul(X,Y,R):-i_(X),o_(Y),s(PX,X),mul(PX,Y,Z),add(Y,Z,R).
mul(X,Y,R):-i_(X),i_(Y),
  s(PX,X),s(PY,Y),add(PX,PY,S),mul(PX,PY,P),add(S,P,R1),s(R1,R).
\end{code}
Note that when the operands are composed of large blocks of alternating
$o^n$ and $i^m$ applications, the algorithm works (roughly) in time proportional 
to the number of blocks rather than the number of digits.  
The following example illustrates a multiplication with two ``tower of exponent'' terms:
\begin{codex}
?- t(30,X),bestCase(X,A),s(A,N),t(40,Y),bestCase(Y,B),s(B,M),
   mul(M,N,P),tsize(P,S),n(S,TSize),write(TSize),nl,fail.
668
\end{codex}
The structural complexity of the result, {\tt 668} is in indicator that such computations are
still tractable. Note however, that the predicate {\tt mul} can be still optimized, by using in its 
last 3 clauses identities similar to (\ref{muleq}), so that it works in all cases one block at a time and it reduces to addition / subtraction operations proportional to the number of blocks.

\begin{codeh}


add0(e,Y,Y).
add0(v(X,Xs),e,v(X,Xs)).
add0(w(X,Xs),e,w(X,Xs)).
add0(X,Y,Z):-o(X1,X),o(Y1,Y),add0(X1,Y1,R),i(R,Z).
add0(X,Y,Z):-o(X1,X),i(Y1,Y),add0(X1,Y1,R),s(R,S),o(S,Z).
add0(X,Y,Z):-i(X1,X),o(Y1,Y),add0(X1,Y1,R),s(R,S),o(S,Z).
add0(X,Y,Z):-i(X1,X),i(Y1,Y),add0(X1,Y1,R),s(R,S),i(S,Z).

sub0(X,e,X).
sub0(X,Y,Z):-o(X1,X),o(Y1,Y),sub0(X1,Y1,R),o(R,R1),s(Z,R1).
sub0(X,Y,Z):-o(X1,X),i(Y1,Y),sub0(X1,Y1,R),o(R,R1),s(R2,R1),s(Z,R2).
sub0(X,Y,Z):-i(X1,X),o(Y1,Y),sub0(X1,Y1,R),o(R,Z).
sub0(X,Y,Z):-i(X1,X),i(Y1,Y),sub0(X1,Y1,R),o(R,R1),s(Z,R1).


tl(N,R):-o_(N),o(R,N). 
tl(N,R):-i_(N),s(v(_,Xs),N),ftl(Xs,R).
 
ftl([],e).
ftl([Y|Ys],R):-
  i(P,w(Y,Ys)),
  s(P,R).

syracuse(N,R):- 
 i(N,M),
 add(N,M,S),
 tl(S,R).
 

nsyr(X,N:R):-nsyr(X,R,0,N).

nsyr(X,R,N1,N2):-
  syracuse(X,S),
  !,
  nsyr1(S,R,N1,N2).

nsyr1(e,S,N1,N2):-!,S=e,N2=N1.
nsyr1(S,S,N,N).
nsyr1(S,R,N1,N3):-N2 is N1+1,nsyr(S,R,N2,N3).

nsyrs(e,[e]):-!.
nsyrs(N,[N|Ns]):-
  syracuse(N,S),
  nsyrs(S,Ns).

op1(F,A,C):-t(A,N),call(F,N,R),n(R,C).  
op2(F,A,B,C):-t(A,N),t(B,K),call(F,N,K,R),n(R,C).  
  
t0:-
   between(0,10,I),
     t(I,X),n(X,N),
     write([I=N,X]),nl,
   fail.
   
 t1:-
   between(0,20,I),
     t(I,X),n(X,N),
     s(P,X),
     write([I=N,X+P]),nl,
   fail.

t2:-
  member(A,[0,1,20,33,100]),member(B,[0,10,11]),
  t(A,X),t(B,Y),add(X,Y,Z),n(Z,R),
  write(A+B=R),nl,
  (A+B=:=R->true;write(error(A+B=R))),nl,
  fail.

t3:-
  member(A,[100,121,133]),member(B,[10,52,100]),
  t(A,X),t(B,Y),sub(X,Y,Z),n(Z,R),
  write(A-B=R),nl,
  (A-B=:=R->true;write(error(A-B=R))),nl,
  fail.

fermat(N,R):-exp2(N,N1),exp2(N1,N2),s(N2,R).

mersenne(P,R):-exp2(P,P1),s(R,P1).

perfect(P,R):-s(P1,P),s(Q,P1),s(v(Q,[Q]),R).

prime48(57885161).

mersenne48(R):-
  prime48(P),
  t(P,T),
  mersenne(T,R).
 
mt:-mersenne48(X),nsyr(X,T:_),write(T),nl,fail.
\end{codeh}

\begin{codeh}
go(M):-
  N is 2^M-1,
  t(N,T),
  time((go1(T))),
  fail.
  
go1(e):-!.
go1(T):-s(PT,T),go1(PT).  
\end{codeh}

\end{document}